\def\BibTeX{{\rm B\kern-.05em{\sc i\kern-.025em b}\kern-.08em
    T\kern-.1667em\lower.7ex\hbox{E}\kern-.125emX}}
\definecolor{dkgreen}{rgb}{0,0.6,0}
\definecolor{gray}{rgb}{0.5,0.5,0.5}
\definecolor{mauve}{rgb}{0.58,0,0.82}
\begin{document}

\newtheorem{thm}{Theorem}
\newtheorem{lma}{Lemma}
\newtheorem{defi}{Definition}
\newtheorem{proper}{Property}

\title{Distributed Quantum Magnetic Sensing for Infrastructure-free Geo-localization}
\author{
\IEEEauthorblockN {
   Thinh Le, 
   Shiqian Guo, 
   Jianqing Liu
}

\IEEEauthorblockA {
  Department of Computer Science, North Carolina State University, Raleigh, USA 27606.\\
}
\IEEEauthorblockA {
    \{tvle2, sguo26, jliu96\}@ncsu.edu
}
}

\IEEEcompsoctitleabstractindextext{%
\begin{abstract}

Modern navigation systems rely heavily on Global Navigation Satellite Systems (GNSS), whose weak spaceborne signals are vulnerable to jamming, spoofing, and line-of-sight blockage. As an alternative, the Earth's magnetic field entails location information and is found critical to many animals' cognitive and navigation behavior. However, the practical use of the Earth's magnetic field for geo-localization hinges on an ultra-sensitive magnetometer. This work investigates how quantum magnetic sensing can be used for this purpose. We theoretically derive the Cramér--Rao lower bound (CRLB) for the estimation error of quantum sensing when using a nitrogen-vacancy (NV) center and prove the quantum advantage over classical magnetometers. Moreover, we employ a practical distributed quantum sensing protocol to saturate CRLB. Based on the estimated magnetic field and the earth's magnetic field map, we formulate geo-localization as a map-matching problem and introduce a coarse-to-fine Mahalanobis distance search in both gradient space (local field derivatives) and corner space (raw field samples). We simulate the proposed quantum sensing-based geo-localization framework over four cities in the United States and Canada. The results report that in high-gradient regions, gradient-space Mahalanobis search achieves sub-kilometer median localization error; while in magnetically smoother areas, corner-space search provides better accuracy and a $4-8\times$ reduction in runtime. 
\end{abstract}
\begin{IEEEkeywords}
Geo-localization, NV center magnetometry, Quantum sensing, map matching
\end{IEEEkeywords}}

\maketitle
 
\thispagestyle{plain}
\IEEEdisplaynotcompsoctitleabstractindextext
\IEEEpeerreviewmaketitle

\section{Introduction}
Robust and reliable navigation is a cornerstone of modern technology, supporting applications ranging from autonomous transportation to defense systems \cite{nahavandi2025comprehensive, olthuis2025cyberattacks}. For decades, the Global Positioning System (GPS) and similar satellite constellations have served as the primary sources of global positioning. However, the reliance on these weak signals beamed from space introduces critical vulnerabilities. They are susceptible to jamming, spoofing, and are simply unavailable in environments such as underwater, underground, or in large buildings \cite{zidan2020gnss, khan2021gps, radovs2024recent, tanaka2023first}. This fragility necessitates the development of complementary navigation technologies that remain robust in contested environments.

One promising approach for navigation and localization is to use the Earth's magnetic field \cite{lei2025review, karshakov2020promising}. The Earth's core, which is composed primarily of molten iron, acts as a colossal dynamo that produces our planet's magnetic field. This interplay of iron and motion deep within the Earth generates magnetic fields strong enough to influence the surface, commonly referred to as the Earth's main (core) magnetic field. Superimposed on this main field are magnetic anomalies. These irregularities are often caused by different geological formations and local variations in crustal magnetization. One significant contribution comes from magnetic minerals embedded within the Earth's crust, which create localized disruptions that reveal hidden geological structures. 
While the earth's magnetic field entails valuable geo-spatial structure, which many marine animals (e.g, sea turtles, whales) and migratory birds use for migration \cite{packmor2024migratory, keller2021map, wiltschko2022discovery}, the ultra-weak and subtle variations of the Earth's core and anomaly magnetic field present unique challenges for civilian and military usage. With the development of highly sensitive quantum sensors, using the Earth’s magnetic field, specifically the magnetic anomalies, as a passive, globally available reference for navigation and localization may become a viable technical solution.

Realizing this potential requires magnetometers with sufficient sensitivity and stability to resolve weak crustal signatures on top of a much larger background field \cite{wang2022quantum, lonard2025limits, barry2020sensitivity}. The nitrogen–vacancy (NV) center in a diamond is an excellent candidate, formed by replacing one carbon atom in the lattice with a nitrogen atom and creating an adjacent vacant lattice site \cite{li2025review}. The negatively charged NV-center is sensitive to the ambient magnetic field. It features an electronic spin-triplet ground state with three sub-levels labeled by the spin projection $m_S = \{0, \pm 1\}$, which can be initialized and read out optically. In a typical setup for magnetic sensing, the spin is first prepared in the $m_S = 0$ state and then driven to coherent transitions (or superposition) between $m_S = 0$ and $m_S = \pm 1$. An external magnetic field shifts the corresponding resonance frequencies (or the probability amplitudes between $m_S$'s) via the Zeeman effect. By monitoring the frequency, one can estimate the strength of the magnetic field with high precision \cite{kuwahata2020magnetometer, hong2013nanoscale, abe2018tutorial}. This setup is a standard Ramsey interferometry for a single shot measurement \cite{degen2017quantum}. Like classical sensing paradigms, $N$ independent executions of NV-center Ramsey measurements only lead to $\frac{1}{\sqrt{N}}$ error scaling due to the independent per-shot Gaussian noises. This limit is widely known as the standard quantum limit (SQL). In contrast, some recent works have demonstrated that NV-center quantum magnetometers with only correlated measurements (without using entanglement) can beat SQL and approach the Heisenberg limit (HL) with a scaling factor of $\frac{1}{N}$ \cite{guo2025two,bonato2016optimized}, which is the ultimate precision limit only constrained by the law of physics. Therefore, it enables highly sensitive sensing of the Earth’s magnetic field and results in accurate geo-localization.


In a brief overview, this paper makes the following contributions:
\begin{itemize}
    \item We derive the Cramér–Rao lower bound (CRLB) for a single-spin NV-center magnetometer operated via Ramsey interferometry and demonstrate the quantum advantage over classical scalar magnetometers.
    \item Building on our prior work, we adopt a two-stage NV-center sensing protocol proven to beat SQL and show how it naturally provides field estimates necessary for geo-localization.
    \item We formulate geo-localization as a distributed sensing and map-matching problem. We introduce a Mahalanobis distance search algorithm in both gradient space and corner space to localize the site of interest.
    \item Using the U.S. Geological Survey (USGS) magnetic anomaly grid for North America, we simulate the proposed design over four geographically diverse sites in the contiguous United States (CONUS) and demonstrate high precision in localization.
\end{itemize}

The remainder of this paper is organized as follows. $\S$~\ref{sec:related_works} reviews prior work on magnetic anomaly navigation. Next, $\S$~\ref{sec:problem} formulates the geo-localization problem. Following that, $\S$~\ref{sec:discussion} develops the Cramér–Rao bound analysis for both classical magnetometers and the single-spin NV-center magnetometer. $\S$~\ref{sec:methods} describes the proposed technique to tackle the previously presented research problem. $\S$~\ref{sec:simulation} presents our simulation results. Finally, $\S$~\ref{sec:conclusion} summarizes our findings and outlines directions for future work.

\section{Related Work}
\label{sec:related_works}
The realization of magnetic-based geo-localization systems rests on three key pillars: (1) high-resolution magnetic anomaly maps, (2) advanced sensors capable of measuring the magnetic field with nanotesla-level sensitivity, and (3) a software stack for extracting signals and performing localization. In this section, we review recent advances across these pillars that enable magnetic-based localization and navigation.

The foundation of any magnetic-based localization system is the magnetic anomaly map, which provides the reference field for positioning. Gupta et al.~\cite{gupta2024lower} established the relationship between magnetic anomaly map quality and achievable positioning accuracy via the CRLB. To implement the bound, they considered two measurement noise models: unimodal Gaussian noise and multimodal noise. Through simulations over a $250 \times 400~\text{km}^2$ region, their results demonstrated a strong correspondence between higher magnetic gradients and lower CRLB (i.e., better accuracy). Recognizing that localization performance depends on the spatial variability of the magnetic field, Penumarti et al.~\cite{penumarti2024global} addressed localization uncertainty in magnetic anomaly navigation by introducing a global, uncertainty-aware path planner. Their key contribution was the generation of an entropy map from pre-existing magnetic anomaly map to statistically quantify areas of high information gain. They employed a multi-objective potential field planner that attracted the vehicle toward these low-entropy regions while simultaneously guiding it to a global goal. Hardware experiments using TurtleBot 4 equipped with a Quspin QTFM total-field magnetometer demonstrated that this method produced paths that significantly improved localization stability and reduced estimation covariance compared to existing methods.

While map quality is important, accurate measurement of weak magnetic anomalies requires highly sensitive magnetometers. Bonato et al.~\cite{bonato2016optimized} demonstrated DC magnetic field sensing with single-spin NV-center magnetometer using a real-time adaptive measurement protocol. They implemented an adaptive Ramsey interferometry protocol in which the spin readout basis was dynamically updated after each single-shot measurement using a Bayesian estimator. The protocol achieved a remarkable magnetic field sensitivity of $6.1 \pm 1.7 ~\text{nT}/\sqrt{\text{Hz}}$ over a dynamic range of $1.78~\text{mT}$. Building on the development of high-sensitivity NV-center magnetometers, Wang et al.~\cite{wang2022quantum} investigated their viability as aiding sensors for INS using total magnetic intensity maps for localization. However, in practical platforms, the weak crustal magnetic signal is often corrupted by vehicle-induced interference, making signal separation a critical challenge. To improve magnetic signal fidelity, the United States Air Force Research Laboratory, in collaboration with MIT Lincoln Lab, introduced the MagNav signal enhancement challenge \cite{gnadt2020signal}, which aimed to develop algorithms to separate the Earth’s magnetic anomaly field from strong aircraft-induced magnetic interference. A team from SandboxAQ addressed the challenge by proposing a physics-informed machine learning model that integrated the classical Tolles-Lawson (T-L) model with a Liquid Time-Constant Network (LTC) \cite{nerrise2024physics}. Their approach used a closed-form continuous-time (CfC) variant of the LTC to model the nonlinear dynamics of the aircraft-induced magnetic noise. Experimental results demonstrated that the proposed LTC-CfC network achieved up to a $64\%$ reduction in compensation error compared to the standard T-L model and outperformed other baseline methods.

Building on these foundations, several integrated systems have demonstrated practical MagNav capabilities. Q-CTRL introduced a full-stack system~\cite{muradoglu2025quantum} integrating a quantum scalar optically pumped atomic magnetometer, a classical fluxgate vector magnetometer, and a software stack for magnetic denoising and map matching. The software stack processes the magnetometer readings to perform real-time platform denoising and provides periodic position fixes by matching the cleaned magnetic signal against a known anomaly map (e.g., EMAG2v3). Across flight trials, this quantum-assured system consistently outperformed a strategic-grade INS, achieving superior bounded positioning accuracy with a best-case result of 22 meters. In ground trials, the system demonstrated a bounded positioning error approximately seven times lower than that of the INS. However, a major limitation of such map-based approaches is their dependence on pre-surveyed maps. To address this, Lee et al.~\cite{lee2020magslam} introduced MagSLAM, an aerial simultaneous localization and mapping framework (SLAM) that exploits the Earth's magnetic anomaly field without relying on pre-surveyed magnetic maps. The authors adapted the FastSLAM algorithm (Rao-Blackwellized particle filter)~\cite{montemerlo2002fastslam} to fuse scalar magnetic anomaly measurements with an INS. Their method uses a nine-state INS error model to constrain drift, while simultaneously estimating the vehicle's trajectory and building a local magnetic map without any prior survey. Flight tests were performed with a Cessna 208B equipped with a cesium-vapor scalar magnetometer, barometer, and high-grade INS over a $9 \times 12~\text{km}$ region in Louisa, Virginia, at $150~\text{m}$ above ground level. The MagSLAM filter achieved an approximate $17~\text{m}$ distance root mean square position accuracy over a 100-minute flight.

\section{Problem formulation}
\label{sec:problem}

In this section, we study the problem of geo-localization using NV-center quantum measurements of the earth's magnetic field. We consider a platform that moves within a prescribed region of interest and is equipped with one or more magnetometers. A reference magnetic map of the region (typically a magnetic anomaly map) is assumed to be known. At each location, the platform executes a sensing protocol that produces noisy estimates of the local magnetic field. The localization task is to infer the unknown position from these noisy field estimates and the known magnetic map.

Let $d \ge 1$ and let $\mathcal{D} \subset \mathbb{R}^d$ denote the region of interest in which the platform may be located. We model the scalar magnetic field as follows:
\begin{equation}
    B : \mathcal{D} \to \mathbb{R},
    \qquad x \mapsto B(x),
\end{equation}
where $B(x)$ denotes the magnitude of the local geomagnetic field at position $x$. If the true position $x$ were known, the localization engine could query the map and evaluate $B(x)$, obtaining the ideal magnetic field value at that location. In localization, we face the inverse situation: we observe noisy measurements of $B(x)$ at an unknown position and attempt to infer $x$ from these samples, together with the known map. The true platform position is modeled as an unknown variable:
\begin{equation}
    x^\star \in \mathcal{D}.
\end{equation}

\noindent\textbf{Sensor model.} We introduce a general sensor model that applies to the NV-center magnetometer considered in this work. Consider a platform comprising $S \ge 1$ sensors, each producing a distinct position-dependent output. For each sensor $s \in \{1,\dots,S\}$, we model the ideal output at position $x \in \mathcal{D}$ by
\begin{equation}
    h_s : \mathcal{D} \to \mathbb{R},
    \qquad x \mapsto h_s(x).
\end{equation}
In the simplest case of a single sensor co-located with the platform,
\begin{equation}
    S = 1,
    \qquad
    h_1(x) = B(x),
\end{equation}
so the ideal sensor output equals the local magnetic field. For each sensor $s$, the function $h_s$ specifies the ideal measurement produced by that sensor at any position $x$ in the region. Different sensing strategies are then distinguished only by how they choose measurement settings (e.g., number of measurement shots, per-shot sensing times) and how they process the resulting noisy observations.

\noindent\textbf{Measurement model.} At the unknown position $x^{\star}$, the sensing protocol produces a the scalar magnetic field estimate $\hat{B}_s$ for each sensor $s$. We model these estimates as: 
\begin{equation}
    \hat{B}_s = h_s(x^\star) + \varepsilon_s,
    \label{eq:meas-basic-hatB}
\end{equation}
where $\varepsilon_s$ is an effective noise term summarizing both hardware noise and the internal estimation error of the sensing protocol. We assume
\begin{equation}
\mathbb{E}[\varepsilon_s] = 0,
\end{equation}
and the noise terms $\{\varepsilon_{s}\}_{s=1}^S$ are independent across sensors. Collecting all field estimates into a single vector
\begin{equation}
    \hat{B} \coloneqq
    \begin{bmatrix}
        \hat{B}_1 & \cdots & \hat{B}_S
    \end{bmatrix}^\top
    \in \mathbb{R}^S,
\end{equation}
we can write the model compactly as
\begin{equation}
    \hat{B} = h(x^\star) + \varepsilon,
    \label{eq:vector-meas-hatB}
\end{equation}
where $h : \mathcal{D} \to \mathbb{R}^S$  is the stacked ideal sensor outputs, and $\varepsilon \in \mathbb{R}^S$ is the corresponding noise vector. Eq.~\eqref{eq:vector-meas-hatB} is the fundamental forward model in which the unknown state $x^\star$ is mapped to an ideal measurement vector $h(x^\star)$ and then corrupted by additive noise $\varepsilon$. The localization problem is to invert this mapping as accurately as possible given a single realization of $\hat{B}$.

\noindent \textbf{Localization problem.} Given the known magnetic map, the unknown position $x^\star \in \mathcal{D}$, and a measurement model of the form Eq.~\eqref{eq:meas-basic-hatB} or Eq.~\eqref{eq:vector-meas-hatB}, the magnetic localization task is to reconstruct $x^\star$ from the observed data. A localization rule is a mapping
\begin{equation}
    \hat{x} : \mathbb{R}^S \to \mathcal{D},
\end{equation}
which takes as input the vector of field estimates $\hat{B}$ produced by the sensing system and outputs an estimated position $\hat{x}(\hat{B})$.

The measurement noise term $\varepsilon$ in Eq.~\eqref{eq:vector-meas-hatB} represents the critical performance bottleneck where quantum sensors can provide substantial advantages. Classical magnetometers are limited by thermal noise, flicker noise, and sensor drift, while quantum magnetometers, such as NV-center sensors, operate under different fundamental noise mechanisms. In the next section, we compare the fundamental sensitivity limits of NV-center and classical magnetometers using Cramér–Rao lower bound framework.


\begin{figure}[t]
\centering
    \includegraphics[width=\columnwidth]{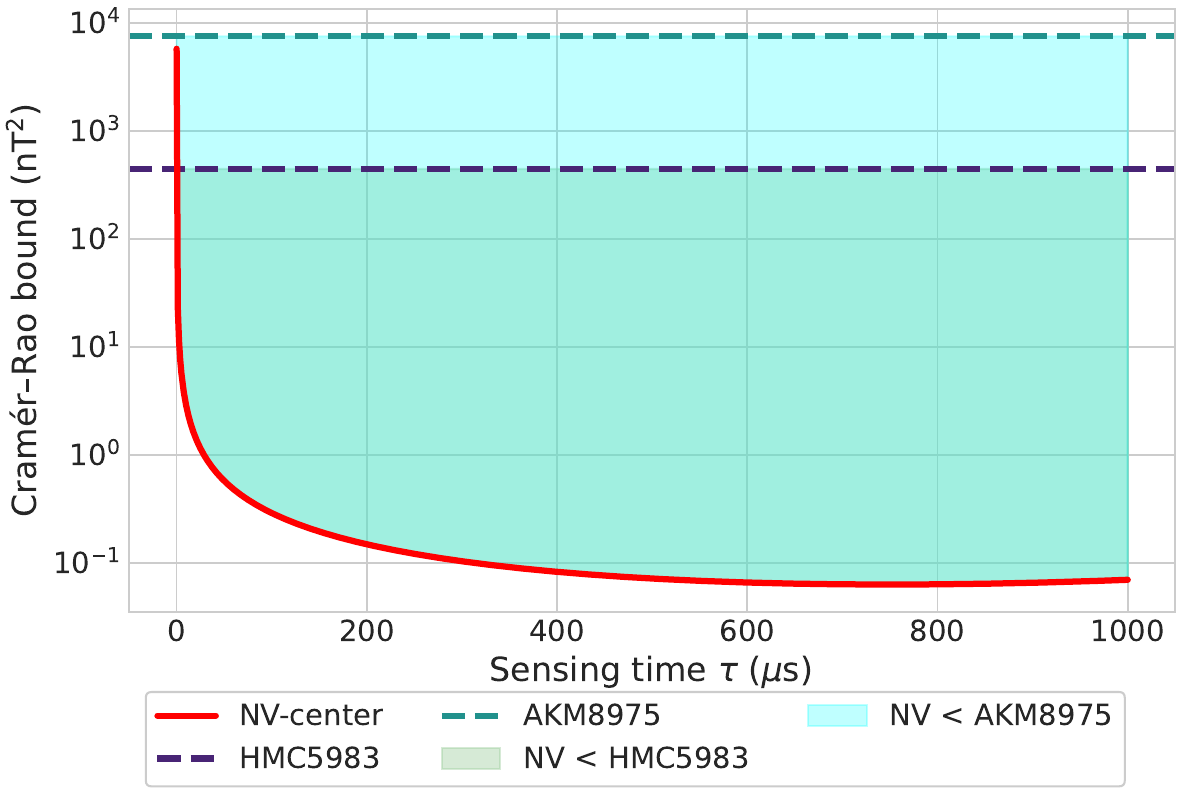}
    \caption{The CRLBs for a single-spin NV-center magnetometer and two commercial sensors (HMC5983 and AKM8975)~\cite{dammann2024cramer}, with respect to sensing time per measurement shot $\tau$. The red curve shows the NV-center CRLB after a fixed total sensing time $T_{\mathrm{total}} = 1.5~\mathrm{s}$ with $T_2^\ast = 1500~\mu\mathrm{s}$ and readout fidelities $F_0 = 0.88$ and $F_1 = 0.98$.}
    \label{fig:qt-advan}
\end{figure}
\section{The Cramér-Rao Lower Bound Analysis \label{sec:discussion}}

CRLB in the area of estimation theory defines the absolute minimum variance that any unbiased estimator can have when estimating an unknown parameter of interest. In this section, we first describe the single-spin NV-center quantum sensing setup, then derive the corresponding CRLB on magnetic field estimation, and finally compare it to the CRLB of classical scalar magnetometers. In so doing, we will demonstrate quantum advantage of quantum sensing over classical apparatus in the task of earth magnetic field sensing. 

\subsection{NV-Center Quantum Sensing Model}
In this work, we consider a single NV-center electron spin in a diamond. 
Coherent manipulation of the electron spin states ($\ket{0} = \ket{m_s = 0}$ and $\ket{1} = \ket{m_s = -1}$) is implemented using resonant microwave pulses generated by the microwave system. The sensing protocol is based on Ramsey interferometry. The process begins with the initialization of the electron spin into the $\ket{0}$ state via resonant optical excitation. A subsequent $\pi/2$ microwave pulse prepares the spin in the superposition state $\ket{\psi} = \frac{1}{\sqrt{2}} (\ket{0} + \ket{1})$ with density matrix: 
\begin{equation}
    \rho = |\psi\rangle \langle \psi| = \frac{1}{2} \begin{pmatrix} 1 & 1 \\ 1 & 1 \end{pmatrix}
\end{equation}
This state then evolves freely for a sensing time $\tau$. During this evolution, two processes occur simultaneously:
\begin{enumerate}
    \item \textbf{Unitary evolution:} the Zeeman shift from the external magnetic field causes the state to accumulate a relative phase $\varphi = \omega \tau$, where $\omega$ is the Larmor angular frequency of the NV spin.
    \item \textbf{Decoherence (dephasing):} The spin state is subject to $T_2^*$ decoherence due to environment noises such as the surface charges in the diamond. The decoherence can be modeled as a Gaussian decay of the off-diagonal elements of the density matrix.
\end{enumerate}
Combining these effects, the density matrix of the spin state before measurement becomes:
\begin{equation}
    \rho(\omega, \tau) = \frac{1}{2} \begin{pmatrix}
1 & e^{-(\tau/T_2^*)^2} e^{i\omega \tau} \\
e^{-(\tau/T_2^*)^2} e^{-i\omega \tau} & 1
\end{pmatrix}
\end{equation}
Following the evolution period, a second $\pi/2$ pulse with a controllable phase is applied to convert the accumulated phase into a population difference. Finally, spin-state readout is realized through resonant spin-selective optical excitation, with emitted photons detected in two orthonormal basis $\ket{0}$ and $\ket{1}$. The readout fidelities denoted as $F_0$ (e.g., $0.88 \pm 0.02$) for the $\ket{0}$ basis and $F_1$ (e.g., $0.98 \pm 0.02$) for the $\ket{1}$ basis \cite{bonato2016optimized} capture the readout noises.

\subsection{Quantum Cramér–Rao Lower Bound}\label{subsec:qtcrb}
To determine the fundamental precision limit of the sensing protocol, we must quantify the maximum information about the Larmor angular frequency $\omega$ that can be extracted from experimental measurement outcomes. The experimentally validated probability of measuring the spin in state $\ket{0}$ after a single Ramsey interferometry is modeled by \cite{bonato2016optimized}:
\begin{align}\label{eq:likelihood}
    p_0 = P(0|\omega) &= \frac{(1 + F_0 - F_1)}{2} \\
    &\quad + \frac{(F_0 + F_1 - 1)}{2} 
    e^{-\left(\frac{\tau}{T_2^*}\right)^2} 
    \cos\left[\omega \tau + \vartheta\right]. \nonumber
\end{align}
where $F_0$ and $F_1$ are the readout fidelities of states $\ket{0}$ and $\ket{1}$, respectively. Here, $\tau$ is the sensing time, $T_2^*$ is the spin dephasing time, and $\vartheta$ is the controllable phase of the second $\pi/2$ microwave pulse in the Ramsey sequence. For convenience, we introduce two constants:
\begin{align}
    \alpha = \frac{(1 + F_0 - F_1)}{2}, \quad \beta = \frac{(F_0 + F_1 - 1)}{2}. 
\end{align}

To quantify the gained knowledge from this measurement, we use the Fisher information. The Fisher information measures how the outcome probabilities respond to changes in the parameter of interest, $\omega$. In our case, the Fisher information, and thus the measurement precision, depends on the controllable phase $\vartheta$ of the second $\pi/2$ pulse and is given as follows \cite{nolan2021machine}:
\begin{align}\label{eqn:cF}
    F(\omega) &= \frac{1}{p_0} \left( \frac{\partial p_0}{\partial \omega} \right)^2 + \frac{1}{1 - p_0}\left( \frac{\partial (1-p_0)}{\partial \omega} \right)^2\\
            &= \frac{1}{p_0 (1-p_0)} \left( \frac{\partial p_0}{\partial \omega} \right)^2. \nonumber
\end{align}
To determine the sensitivity of the measurement probability to changes in frequency, we differentiate $p_0$ with respect to $\omega$ as follows:
\begin{align}
    \frac{\partial p_0}{\partial \omega} &= \beta  e^{-\left(\frac{\tau}{T_2^*}\right)^2} \frac{\partial}{\partial \omega} (\cos[\omega \tau + \vartheta]) \\ \nonumber
                                      &= -\tau  \beta  e^{-\left(\frac{\tau}{T_2^*}\right)^2} \sin[\omega \tau + \vartheta].
\end{align}
The squared derivative is therefore:
\begin{equation}\label{eqn:derive}
    \left(\frac{\partial p_0}{\partial \omega}\right)^2 = \tau^2  \beta^2  e^{-2\left(\frac{\tau}{T_2^*}\right)^2} \sin^2[\omega\tau + \vartheta].
\end{equation}
By substituting Eq.~\eqref{eqn:derive} into Eq.~\eqref{eqn:cF}, we obtain the Fisher information as:
\begin{equation}\label{eqn:cf_new}
    F(\omega) = \frac{\tau^2  \beta^2  e^{-2\left(\frac{\tau}{T_2^*}\right)^2} \sin^2[\omega\tau + \vartheta]}{p_0  (1-p_0))}.
\end{equation}
To establish a fundamental bound, we choose $\vartheta$ to maximize $F(\omega)$. This is equivalent to making the measurement at the point where the signal is most sensitive to a change in $\omega$, which corresponds to maximizing $\left(\frac{\partial p_0}{\partial \omega}\right)^2$. Maximum sensitivity occurs when $\sin^2[\omega \tau + \vartheta] = 1$, which implies that $\omega \tau + \vartheta = \frac{\pi}{2} + k \pi$ for any integer $k$. At this optimal point, we have $\cos[\omega \tau + \vartheta] = 0$, yielding the measurement probabilities:
\begin{align}
    p_0^{opt} &= \frac{(1 + F_0 - F_1)}{2},\quad p_1^{opt} =  \frac{(1 - F_0 + F_1)}{2}.
\end{align}
We find the maximum Fisher information that can be extracted by this measurement scheme is:
\begin{align}
    F^{max}(\omega) =  \tau^2\frac{(F_0 + F_1 - 1)^2}{1- (F_0 - F_1)^2} e^{-2\left(\frac{\tau}{T_2^*}\right)^2}.
\end{align}
Therefore, the quantum CRLB of any unbiased estimator of $\omega$ is given by:
\begin{align}
    Var(\omega) & \geq \frac{1}{F^{\text{max}}(\omega)} =
      \frac{1 - (F_0 - F_1)^2}{\tau^2 (F_0 + F_1 -1)^2} e^{2\left(\frac{\tau}{T_2^*}\right)^2}
\end{align}
A spin-based magnetometer can sense a DC magnetic field $B$ via Zeeman shift, where the Larmor angular frequency $\omega$ is related to the magnetic field by  $\omega = \gamma B$ with $\gamma$ being the gyromagnetic ratio. Propagating the bound through this relation yields the bound for the minimum estimation variance of an unknown magnetic field for using a single-spin NV-center magnetometer:
\begin{equation}
Var_Q(B) \geq \frac{1- (F_0 - F_1)^2}{\gamma^2 \tau^2 (F_0 + F_1 -1)^2} e^{2\left(\frac{\tau}{T_2^*}\right)^2}
\end{equation}

\subsection{Classical Magnetometer Cramér–Rao Lower Bound}\label{subsec:ccrlb}
To establish a baseline for the quantum sensing protocol, we first consider the fundamental precision limit of a classical scalar magnetometer. We model a single measurement at a fixed location as:
\begin{equation}
    \hat{B} = B + \epsilon
\end{equation}
where $B$ is the true magnetic field and $\epsilon$ is the additive Gaussian noise with zero mean and variance $\sigma^2$. The precision of any unbiased estimate of the magnetic field based on this measurement is fundamentally limited by the CRLB, which is derived from the Fisher information of the measurement's probability distribution. The result is formally stated in the following theorem.
\begin{thm}\label{thm:1}
    Given a single measurement $\hat{B}$ from a classical magnetometer where the noise is additive and Gaussian with variance $\sigma$, the Cramér–Rao bound $Var_C(B)$ of any unbiased estimator of the magnetic field $B$ is bounded by: 
    \begin{equation} Var_C(B) \ge \sigma^2 \end{equation}
\end{thm}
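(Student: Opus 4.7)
The plan is to apply the textbook Cramér--Rao framework directly to the Gaussian single-shot model $\hat{B} = B + \epsilon$, treating $B$ as the unknown scalar parameter and $\hat{B}$ as the single observation. Since the likelihood is smooth in $B$ and its support does not depend on $B$, the standard CRLB regularity conditions are met, so it suffices to compute the Fisher information $I(B)$ and invoke $\mathrm{Var}_C(B) \ge 1/I(B)$.

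Concretely, I would first write the likelihood explicitly as
\begin{equation}
p(\hat{B} \mid B) = \frac{1}{\sqrt{2\pi\sigma^2}} \exp\!\left(-\frac{(\hat{B}-B)^2}{2\sigma^2}\right),
\end{equation}
and then take the log-likelihood $\ell(B) = \log p(\hat{B}\mid B)$, whose score is $\partial_B \ell = (\hat{B}-B)/\sigma^2$. Next I would compute the Fisher information via either $I(B) = \mathbb{E}[(\partial_B \ell)^2]$ or the negative expected Hessian $I(B) = -\mathbb{E}[\partial_B^2 \ell] = 1/\sigma^2$; both routes give the same answer, but the Hessian form is essentially one line. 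Substituting into the scalar CRLB yields $\mathrm{Var}_C(B) \ge 1/I(B) = \sigma^2$, which is the claim.

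The ``hard part'' is essentially nonexistent here, since the result is a classical exercise; the only thing worth being careful about is justifying that the CRLB is applicable (smoothness and parameter-independent support of the Gaussian density, and unbiasedness of the estimator under consideration) rather than the calculation itself. For rhetorical completeness I would close by noting that the bound is tight and attained by the trivial unbiased estimator $\hat{T}(\hat{B}) = \hat{B}$, so that $\sigma^2$ is the exact minimum-variance performance of any unbiased classical scalar estimator under this noise model. This clean baseline is exactly what is needed to contrast against the NV-center quantum bound derived in Section~\ref{subsec:qtcrb}, where the $\tau^2$ factor in $F^{\max}(\omega)$ ultimately produces the scaling advantage over $\sigma^2$.
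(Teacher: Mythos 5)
Your proposal is correct and follows essentially the same route as the paper's Appendix~\ref{appen:A}: write the Gaussian likelihood, compute the Fisher information $1/\sigma^2$ from the score (the paper uses $\mathbb{E}[(\partial_B \ln p)^2]$, you also note the equivalent negative-Hessian form), and invoke the scalar Cramér--Rao inequality. The extra remarks on regularity conditions and attainability by $\hat{T}(\hat{B})=\hat{B}$ are fine additions but not needed.
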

\begin{proof}
    The proof is detailed in Appendix \ref{appen:A}. 
\end{proof}

\begin{figure*}[t]
	\centering
	\includegraphics[width=\textwidth, height=2.2 in, keepaspectratio]{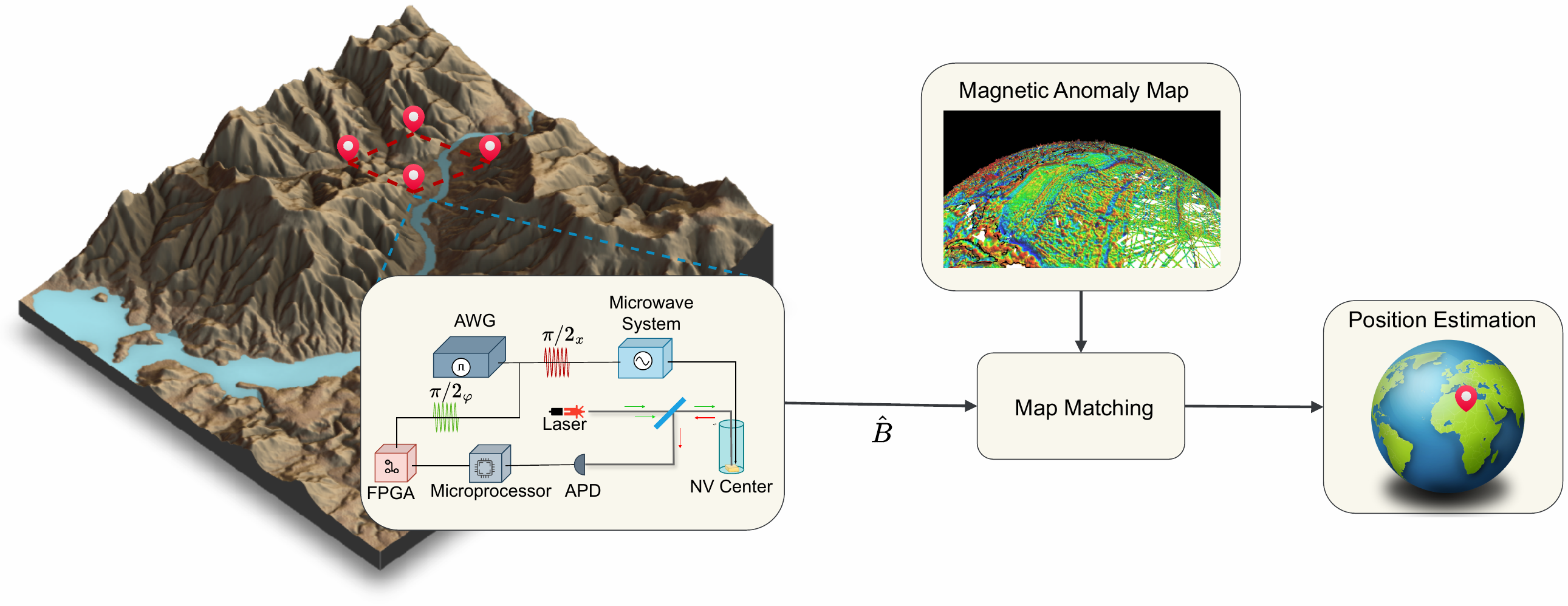}
	\caption{ Overview of quantum magnetic sensing and localization system. A single-spin NV-center magnetometer converts local magnetic fields into field estimates $\hat{B}$, which are fed to a map matching engine together with a magnetic anomaly map. The map matching module compares $\hat{B}$ to the map and outputs a position estimate.}
    \label{fig:summary}
\end{figure*}

To demonstrate the quantum advantage of the NV-center magnetometer, we compare its CRLB against that of established classical magnetometers. Fig.~\ref{fig:qt-advan} shows the quantum CRLB on magnetic field sensing as a function of sensing time $\tau$ per Ramsey shot, for a fixed total sensing time $T_{total} = 1.5~\mathrm{s}$. In comparison, we set the inhomogeneous dephasing time $T_2^* = 1,500 ~\mu\mathrm{s}$, consistent with room-temperature single-spin systems~\cite{herbschleb2019ultra}.
The classical sensors used for this benchmark were selected from \cite{dammann2024cramer}, which are neither dominated by external magnetic field noise (e.g., MMC246xMT) nor exhibit anisotropic resolution (e.g., YAS532, YAS529). For each device, we use the noise variance reported in \cite{dammann2024cramer} as the variance parameter in the classical model of Theorem~\ref{thm:1}, which yields the horizontal lines in Fig.~\ref{fig:qt-advan}. As shown in figure, the quantum CRLB is highly dependent on sensing time, achieving a minimum at an optimal value before increasing due to decoherence. The shaded regions highlight the operational windows where quantum sensor's theoretical precision surpasses that of classical benchmarks, demonstrating quantum advantage. This trade-off motivates the need to operate the sensor in optimal sensing time.

\section{Methods}
\label{sec:methods}

In this section, we translate the abstract sensing and estimation framework into a concrete end-to-end localization pipeline. Building on the CRLB analysis in \S~\ref{sec:discussion} and the problem formulation in \S~\ref{sec:problem}, we show how single-spin NV-center magnetometers, an adaptive two-stage sensing protocol, and a Mahalanobis map matching engine are integrated into a practical system. We first present a high-level overview of the sensing and localization architecture, then describe the Bayesian neural network and reinforcement learning components that govern NV-center sensing, and finally detail the gradient-space and corner-space Mahalanobis search procedures that infer position from noisy field estimates and a precomputed magnetic map.

\subsection{System Overview}
Our magnetic localization system integrates three core components: (i) single-spin NV-center magnetometers, (ii) an adaptive magnetic field estimation protocol based on Bayesian Neural Network (BNN) and reinforcement learning (RL), and (iii) a map matching localization engine. As summarized in Fig.~\ref{fig:summary}, the NV sensors convert the local magnetic field into fluorescence measurements. The protocol turns measurements into field estimates at one or more spatial sampling locations. The map matching module then compares these estimates against precomputed map features to infer the position.

In this work, we deliberately adopt the same single-spin NV-center magnetometer setup as in our prior two-stage sensing protocol \cite{guo2025two}. The rationale for selecting this protocol is that we have experimentally demonstrated that it can saturate the quantum CRLB as illustrated in Fig.\ref{fig:qt-advan}, thus beating SQL and approaching HL. More importantly, it achieves the best estimation accuracy among existing methods in the literature for estimating wide-range magnetic field. 

\noindent\textbf{Two-stage adaptive sensing protocol.} Instead of operating the NV-center magnetometer with fixed settings, we employ a two-stage adaptive sensing protocol that dynamically selects the sensing time and control phase for each Ramsey interferometry under a finite time budget. Conceptually, this adaptive protocol sits between the hardware and the map matching engine. Given a sensing budget, the protocol selects a sequence of Ramsey measurements, processes the resulting fluorescence records, and produces a magnetic-field estimate $\hat{B}$. The collection of these estimates across all sites is then passed to the Mahalanobis map matching module, which treats them as noisy samples of the underlying magnetic map. A detailed description of the training and implementation of this two-stage protocol is provided in $\S$~\ref{subsec:two_stage}.

\noindent\textbf{Spatial sampling and map matching.}
At runtime, the localization system collects spatially distributed samples of the magnetic field at measurement sites within the region of interest. These sites may correspond either to multiple NV-center magnetometers deployed in close proximity for position inference or to a single sensor interrogating nearby locations to build a local magnetic fingerprint. The resulting collection of measurements defines a magnetic intensity pattern that can be compared against a reference magnetic anomaly map derived from satellite or aeromagnetic surveys. The reference map is typically stored as a GeoTIFF raster, where each cell encodes the magnetic field intensity, and spatial gradients computed from this raster form a distinctive magnetic signature for each location. The localization engine computes the statistical similarity between the measured pattern and the map features, identifying the cell whose magnetic signature best matches the observed data. This map matching procedure effectively inverts the magnetic field map by taking the measured field intensities and inferring the geospatial coordinates that would produce those readings. The resulting position estimate is reported in the map's coordinate reference frame and can be converted back to latitude and longitude for interpretation.

The components in this system work synergistically to achieve efficient localization. The adaptive sensing protocol optimizes each Ramsey measurement, producing refined field estimates that capture the relevant spatial structure of the magnetic field. The map matching engine then identifies the position that best explains these estimates in relation to the reference map.

\subsection{Two-Stage Adaptive Sensing Protocol}
\label{subsec:two_stage}

We adopt the two-stage adaptive sensing protocol introduced in our earlier study~\cite{guo2025two} and use it as the front end of the magnetic localization pipeline. We refer readers to \cite{guo2025two} for more details of the protocol while we only present a sketch of how it works here. The protocol comprises two phases: an offline training phase, followed by an online inference phase.

\noindent \textbf{Offline training.} In this phase, we trained two neural networks: a BNN estimator for coarse inference of Larmor frequency and a federated RL agent for adaptive refinement. For training the BNN, we specified the Larmor frequency interval $(0,\omega_{\max})$ that covers the signal of interest and discretized it into grid points ${\omega_j}$'s. A training dataset was generated by simulating Ramsey interferometry with fixed control parameters $ (\tau = \tau_{\min} = \pi / \omega_{\max}, \varphi = \varphi_0 = 0)$ for the likelihood function in Eq.(\ref{eq:likelihood}). Each simulated run produces a sequence of measurement outcomes $\vec{\mu}$, with the corresponding training label defined as the posterior distribution $p(\omega_j \mid \vec{\mu})$ over the discretized frequencies $\omega_j \in (0, \omega_{\max})$. The network was optimized by minimizing the cross-entropy loss between its output logits and the target posteriors, thereby learning to perform direct Bayesian inversion from experimental data to parameter estimates.

For the federated RL agent, the Larmor frequency range $(0,\omega_{\max})$ is partitioned into $N$ equal-width intervals $[\tfrac{n\,\omega_{\max}}{N}, \tfrac{(n+1)\,\omega_{\max}}{N})$,
each defining a local environment for the $n$-th RL agent. During training, after the \(k\)-th Ramsey interferometry, the agent’s input is the current posterior distribution \(p(\omega \mid \vec{\mu}^{(k)})\) restricted to the local subrange. It then outputs a control action $(\tau_k,\varphi_k)$, representing the sensing time and control phase for the next Ramsey interferometry. Given the resulting measurement outcome $\mu_{k+1}$, the posterior distribution is updated using Bayes' rule, and the Ramsey interferometry is repeated until the sensing budget is exhausted. Each local RL agent is trained to maximize its cumulative reward and periodically computes a local parameter update. These local updates are then aggregated using the FedAvg algorithm to produce a single global RL policy that generalizes across all subranges of \((0,\omega_{\max})\).

\noindent \textbf{Online inference.} After offline training, the sensing protocol moves to the two-stage online inference phase. In the non-adaptive stage, we simulated Ramsey interferometry with fixed sensing parameters $\tau_{\min} = \pi/\omega_{\max}$ and $\varphi_0 = 0$ to generate a sequence of single-shot measurement outcomes $\vec{\mu}$. The trained BNN estimator takes $\vec{\mu}$ as input and outputs a coarse estimate of the Larmor frequency $\hat{\omega}$, which defines the refinement interval \( \left[ \hat{\omega} - \frac{\Delta}{2}, \hat{\omega} + \frac{\Delta}{2} \right] \). In the adaptive stage, the posterior distribution within this subrange is approximated using a particle filter. The trained global RL agent utilizes this posterior distribution along with the remaining time resources to determine optimized sensing parameters $(\tau, \varphi)$ for the subsequent Ramsey interferometry. This iterative process of RL-based parameter selection and Bayesian updating continues until the sensing time budget $R_{\max}$ is exhausted.

\subsection{Mahalanobis Search}
\label{subsec:mahal}

Our map matching engine models localization problem as a maximum likelihood matching problem on the magnetic map. Given a noisy collection of magnetic field estimations from NV-center magnetometer, we search over candidate map cells and select the location whose magnetic signature is most consistent with the measurements. This is implemented as a coarse–to–fine Mahalanobis distance search, with two closely related metrics: (i) feature-space Mahalanobis distance based on local magnetic gradients, and (ii) corner-space Mahalanobis distance based directly on the measured field values. Both variants share the same search structure and differ only in their feature representation.

\noindent\textbf{Gradient-based feature.} We model the magnetic map as a scalar field $B(x,y)$ sampled on a uniform two-dimensional grid. After converting the magnetic map to planar coordinates (so that distances in \(x\) and \(y\) share a common length unit), the map
is represented as a 2D array
\begin{equation}
    B_{i,j} \coloneqq B(x_i,y_j),
\end{equation}
where $(x_i,y_j)$ denotes the planar coordinates of the grid point with integer indices $i,j$, and the grid spacing in the $x$ and $y$ directions is denoted by $h_x,h_y>0$. We define a $2\times 2$ \emph{cell} by its four neighboring pixels
\begin{equation}
(B^{\mathrm{LL}}_{i,j},\,B^{\mathrm{LR}}_{i,j},\,B^{\mathrm{UL}}_{i,j},\,B^{\mathrm{UR}}_{i,j})
\coloneqq
(B_{i,\,j+1},\,B_{i+1,\,j+1},\,B_{i,\,j},\,B_{i+1,\,j}),
\end{equation}
corresponding respectively to the lower-left (LL), lower-right (LR), upper-left (UL), and upper-right (UR) corners of the cell. For each $2\times 2$ cell, we compute a three-dimensional feature vector
\begin{equation}
    z_{i,j} \coloneqq
    \begin{bmatrix}
        g_x(i,j) \\[2pt] g_y(i,j) \\[2pt] d_{xy}(i,j)
    \end{bmatrix}
    \in \mathbb{R}^3,
\end{equation}
where $g_x$ and $g_y$ approximate the first-order spatial derivatives of $B$ and $d_{xy}$ approximates a mixed second derivative. These quantities are given by:
\begin{align}
    g_x(i,j) &=
    \frac{B^{\mathrm{LR}}_{i,j} + B^{\mathrm{UR}}_{i,j}
          - B^{\mathrm{LL}}_{i,j} - B^{\mathrm{UL}}_{i,j}}{2 h_x},
    \label{eq:gx-def}\\
    g_y(i,j) &=
    \frac{B^{\mathrm{UL}}_{i,j} + B^{\mathrm{UR}}_{i,j}
          - B^{\mathrm{LL}}_{i,j} - B^{\mathrm{LR}}_{i,j}}{2 h_y},
    \label{eq:gy-def}\\
    d_{xy}(i,j) &=
    \frac{B^{\mathrm{LL}}_{i,j} - B^{\mathrm{LR}}_{i,j}
          - B^{\mathrm{UL}}_{i,j} + B^{\mathrm{UR}}_{i,j}}{h_x h_y}.
    \label{eq:dxy-def}
\end{align}
In matrix form, if we collect the four corner values into
\begin{equation}
    b_{i,j} \coloneqq
    \begin{bmatrix}
        B^{\mathrm{LL}}_{i,j}\\[2pt]
        B^{\mathrm{LR}}_{i,j}\\[2pt]
        B^{\mathrm{UL}}_{i,j}\\[2pt]
        B^{\mathrm{UR}}_{i,j}
    \end{bmatrix},
\end{equation}
then the feature map is linear:
\begin{equation}
    \begin{bmatrix}
        g_x\\[2pt] g_y\\[2pt] d_{xy}
    \end{bmatrix}
    =
    T\,b_{i,j}, \qquad
    T =
    \begin{bmatrix}
        -\tfrac{1}{2h_x} & \tfrac{1}{2h_x} & -\tfrac{1}{2h_x} & \tfrac{1}{2h_x} \\
        -\tfrac{1}{2h_y} & -\tfrac{1}{2h_y} & \tfrac{1}{2h_y} & \tfrac{1}{2h_y} \\[2pt]
        \tfrac{1}{h_x h_y} & -\tfrac{1}{h_x h_y}
        & -\tfrac{1}{h_x h_y} & \tfrac{1}{h_x h_y}
    \end{bmatrix}.
\end{equation}
The vector $z_{i,j}$ characterizes the local shape of the magnetic field: the components $(g_x,g_y)$ encode the local slope of $B$ in the east–west and north–south directions, while $d_{xy}$ captures a mixed curvature term. Regions where the magnetic field changes rapidly correspond to large $||z_{i,j}||$, whereas magnetically smooth regions have small gradients and mixed derivatives. This representation emphasizes structurally distinctive regions with high magnetic contrast while remaining robust to global field variations.

\noindent\textbf{Mahalanobis distance in gradient-space.}
For a given candidate location, the applied quantum sensing protocol produces noisy estimates of the magnetic field at the four cell corners:
\begin{equation}
    \hat{b} =
    \begin{bmatrix}
        \hat{B}^{\mathrm{LL}} \\ \hat{B}^{\mathrm{LR}} \\ \hat{B}^{\mathrm{UL}} \\ \hat{B}^{\mathrm{UR}}
    \end{bmatrix},
    \quad
    \mathrm{Cov}(\hat{b}) = \Sigma_b \in \mathbb{R}^{4\times 4}.
\end{equation}
Here, $\Sigma_b$ is the covariance matrix of the magnetic field estimates and characterizes the uncertainty introduced by the NV sensing protocol (e.g., spin decoherence and readout noises). Using the same linear stencil, we map $\hat{b}$ into a measured feature vector
\[
    \hat{z} \coloneqq T \hat{b},
\]
By linear error propagation, the feature-space covariance is
\[
    \Sigma_z \coloneqq T \Sigma_b T^\top \in \mathbb{R}^{3\times 3}.
\]
For any map cell $(i,j)$ in the search region, we then define the gradient-space Mahalanobis distance
\begin{equation}
    D_{\mathrm{grad}}^2(i,j)
    \coloneqq
    \bigl(\hat{z} - z_{i,j}\bigr)^\top
    \Sigma_z^{-1}
    \bigl(\hat{z} - z_{i,j}\bigr).
    \label{eq:mahal-grad}
\end{equation}
The gradient-based search selects the cell that minimizes Eq.~(\ref{eq:mahal-grad}) over all candidates in the region of interest.

\noindent\textbf{Mahalanobis distance in corner-space.}
In addition to the gradient features, we also consider a complementary metric that uses directly the raw magnetic field measurements. Let $\mathcal{I} \subseteq \{\mathrm{LL},\mathrm{LR},\mathrm{UL},\mathrm{UR}\}$ denote the index set of observed corners, and let $S \coloneqq |\mathcal{I}|$. We collect the corresponding measurements into
\[
    \hat{b}^{(\mathcal{I})} \in \mathbb{R}^S, \qquad
    \Sigma_b^{(\mathcal{I})} \in \mathbb{R}^{S\times S},
\]
where $\Sigma_b^{(\mathcal{I})}$ is the per-corner covariance restricted to the observed entries. For each map cell $(i,j)$, we form the corresponding map vector $b_{i,j}^{(\mathcal{I})} \in \mathbb{R}^S$ by reading the magnetic field from the map at the same subset of corners. The corner-space Mahalanobis distance is then
\begin{equation}
    D_{\mathrm{corner}}^2(i,j)
    \coloneqq
    \bigl(\hat{b}^{(\mathcal{I})} - b_{i,j}^{(\mathcal{I})}\bigr)^\top
    \bigl(\Sigma_b^{(\mathcal{I})}\bigr)^{-1}
    \bigl(\hat{b}^{(\mathcal{I})} - b_{i,j}^{(\mathcal{I})}\bigr),
    \label{eq:mahal-corner}
\end{equation}
The multi-sensor localization use the same search procedure as in the gradient case but substitute Eq.~(\ref{eq:mahal-corner}) for Eq.~(\ref{eq:mahal-grad}). 

The two Mahalanobis distances target complementary aspects of the magnetic map, making each metric particularly effective for locations with distinct magnetic field patterns. The gradient-based distance addresses localization as a problem of matching the local shape of the magnetic field. This approach effectively cancels the dominant Earth's magnetic core field, which has large intensity (25-65 $\mu$T) and varies slowly over thousands of kilometers, while preserving the distinctive gradient signatures of local magnetic anomalies. By computing spatial derivatives, this metric serves as a high-pass filter that suppresses the nearly uniform background field and amplifies local magnetic contrasts. In contrast, the corner-based distance preserves the raw magnetic field intensities. This approach captures both magnitude and the correlation of the magnetic field measurements. As a result, it can be more informative in magnetically smooth regions, where spatial derivatives are small but the total field still exhibits meaningful spatial variation. 

\noindent\textbf{Mahalanobis search.}
Both metrics are embedded in the same coarse-to-fine search strategy, which uses the structure of the map to reduce computation and performs an exact search over a refined set of candidate cells. Let $\mathcal{R}$ denote an initial region of interest (ROI) in cell coordinates, such as the full map or a fixed-radius window around a prior. For each cell $(i,j)\in\mathcal{R}$ we can compute a distance
\[
    D^2(i,j) \in \{D_{\mathrm{grad}}^2(i,j),\, D_{\mathrm{corner}}^2(i,j)\},
\]
depending on whether we work in feature space or corner space. The search algorithm proceeds as follows:

\begin{algorithm}[h]
\caption{Coarse-to-fine Mahalanobis map matching}
\label{alg:mahalanobis-search}
\begin{algorithmic}[1]
\Statex \textbf{Input:} feature map $f(i,j)$ on a grid; region of interest $\mathcal{R}$; measurement vector $y$; covariance matrix $\Sigma$; stride $s$; number of coarse seeds $K_0$.
\Statex \textbf{Output:} estimated cell $(\hat{i},\hat{j})$.

\State precompute $\Sigma^{-1}$
\State $\mathcal{C} \gets \emptyset$ \Comment{coarse candidates}

\For{each $(i,j) \in \mathcal{R}$ sampled on stride $s$}
    \State $D^2(i,j) \gets \bigl(y - f(i,j)\bigr)^\top \Sigma^{-1} \bigl(y - f(i,j)\bigr)$
    \State append $\bigl(D^2(i,j),\, i,\, j\bigr)$ to $\mathcal{C}$
\EndFor

\State sort $\mathcal{C}$ by $D^2$; let $\mathcal{S}$ be the first $K_0$ seeds
\State $\mathcal{R}_{\mathrm{ref}} \gets \emptyset$ \Comment{refined ROI}

\For{each $(\cdot,i_s,j_s)$ in $\mathcal{S}$}
    \For{ $(i,j)$ in a fixed local window around $(i_s, j_s)$}
        \State add $(i,j)$ to $\mathcal{R}_{\mathrm{ref}}$
    \EndFor
\EndFor

\State $\mathcal{C}_{\mathrm{ref}} \gets \emptyset$ \Comment{refined candidates}

\For{each $(i,j) \in \mathcal{R}_{\mathrm{ref}}$}
    \State $D^2(i,j) \gets \bigl(y - f(i,j)\bigr)^\top \Sigma^{-1} \bigl(y - f(i,j)\bigr)$
    \State append $\bigl(D^2(i,j),\, i,\, j\bigr)$ to $\mathcal{C}_{\mathrm{ref}}$
\EndFor

\State sort $\mathcal{C}_{\mathrm{ref}}$ by $D^2$; let $(D^2_{\min},\hat{i},\hat{j})$ be the first element
\State \textbf{return} $(\hat{i},\hat{j})$
\end{algorithmic}
\end{algorithm}

\begin{figure*}[ht!]
    \centering
    \begin{subfigure}[t]{0.49\textwidth}
        \centering
        \includegraphics[height=2.2in]{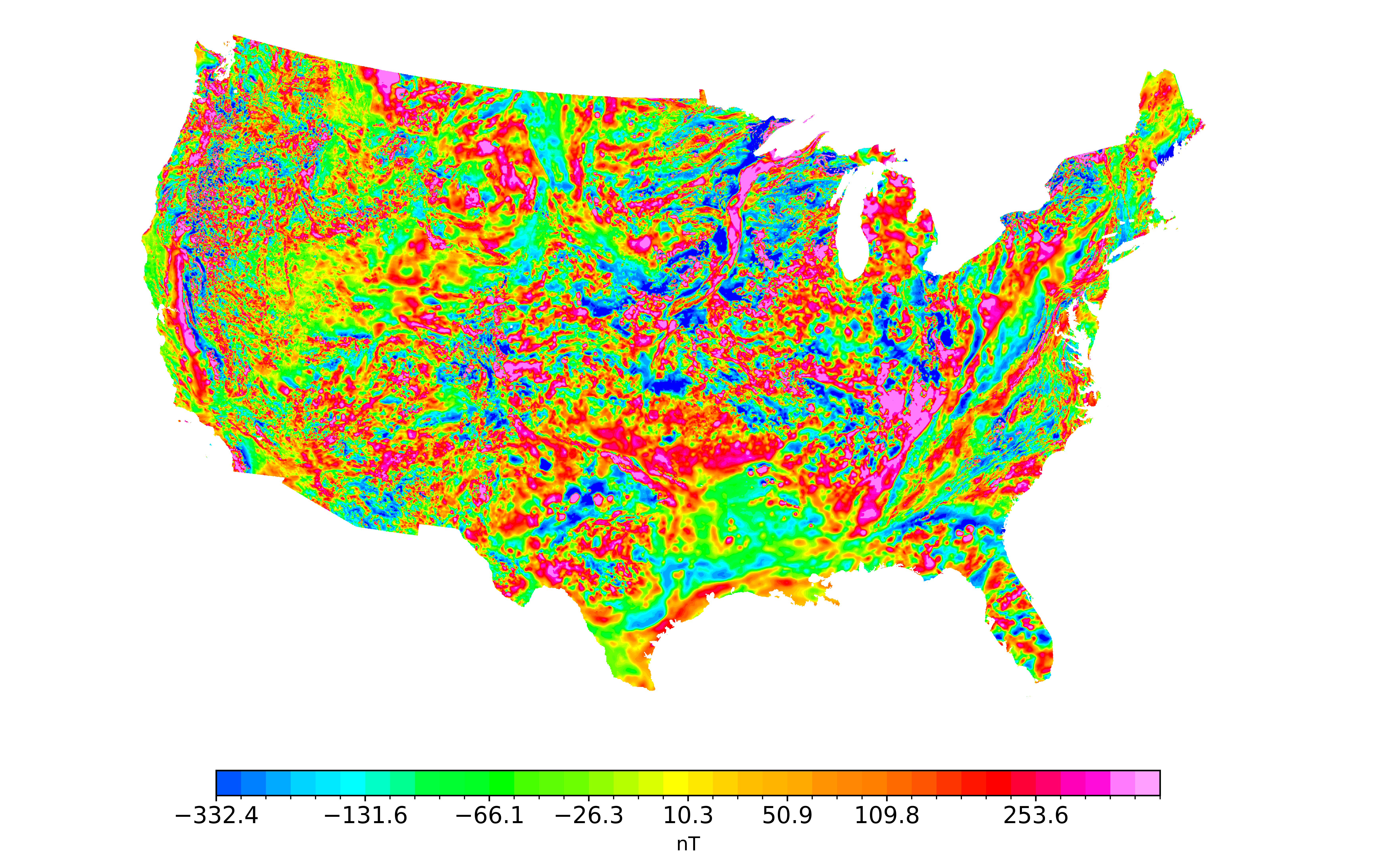}
        \caption{}
        \label{fig:roi_CONUS}
    \end{subfigure}
    \begin{subfigure}[t]{0.49\textwidth}
        \centering
        \includegraphics[height=2.2in]{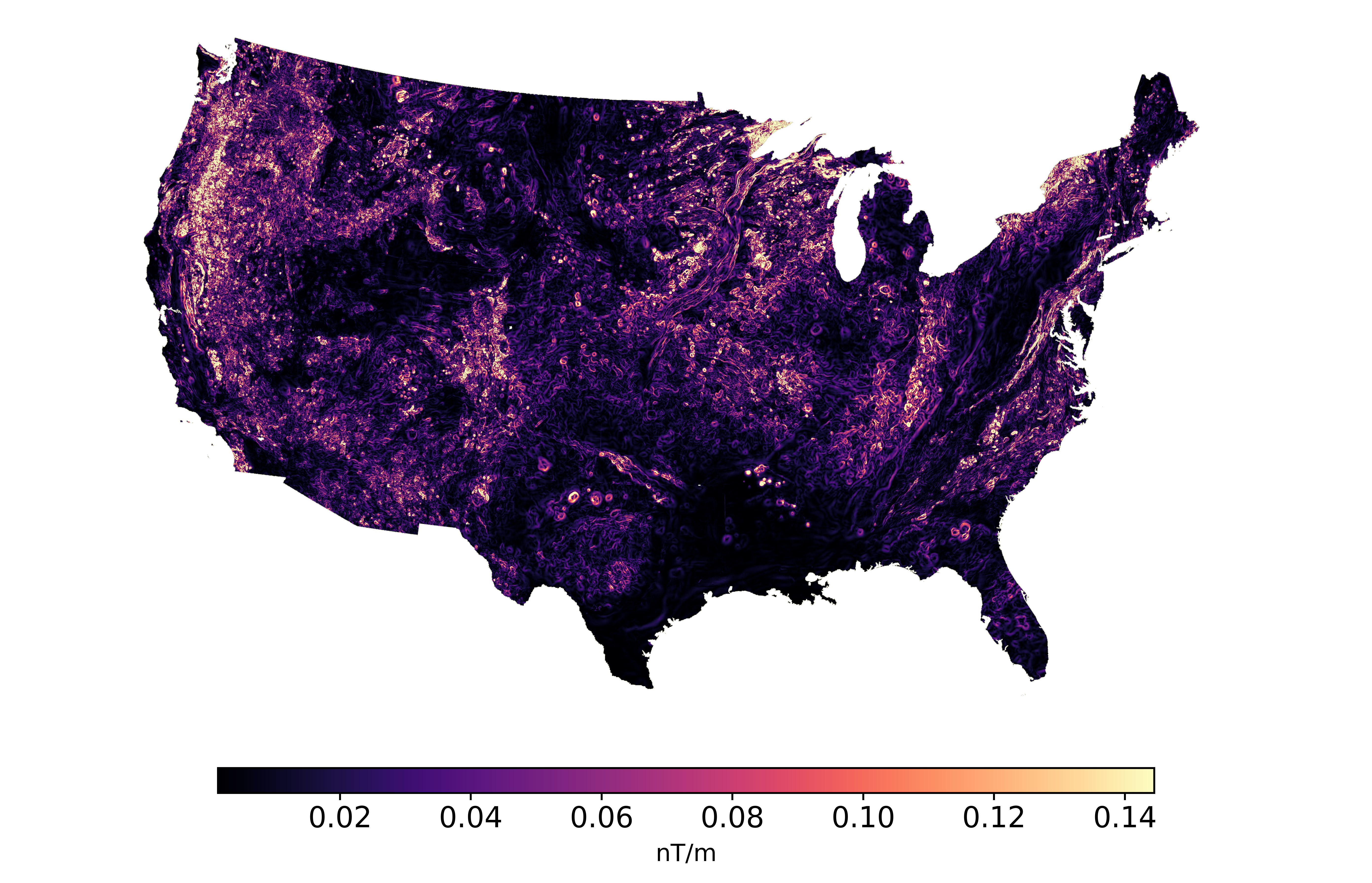}
        \caption{}
        \label{fig:roi_grad}
    \end{subfigure}
    
    \caption{Region of interest in our simulation. (a) Magnetic anomaly map of the Conterminous United States (CONUS) (b) Gradient magnitude of the magnetic anomaly field}
    \label{fig:ROI}
\end{figure*}

\noindent\textbf{Computational complexity.}
Let $N$ denote the total number of $2\times 2$ cells in the initial ROI, and let $d$ denote the dimension of the feature vector ($d=3$ for gradient space and $d=S\le 4$ for corner space). For fixed $d$, the cost of evaluating $D^2(i,j)$ at one cell is $O(1)$ once ${\Sigma^{-1}}$ has been precomputed.

\begin{itemize}
    \item \emph{Gradient precomputation.} The feature map $(g_x,g_y,d_{xy})$ is obtained by convolving the magnetic map with $2\times 2$ finite-difference kernels. This is a one-time $O(N)$ operation performed offline and amortized over all localization queries.
    \item \emph{Coarse pass.} The stride $s$ subsampling evaluates $D^2(i,j)$ on approximately $N/s^2$ cells. This cost is $O(N/s^2)$ for distance evaluations plus $O((N/s^2)\log(N/s^2))$ to sort and retain the $K_0$ smallest values.
    \item \emph{Refine pass.} The refined ROI $\mathcal{R}_{\mathrm{ref}}$ is constructed as the union of fixed local windows around the $K_0$ coarse seeds. If each window contains at most $C$ cells, then
    \begin{equation}
        |\mathcal{R}_{\mathrm{ref}}| \,\le\, C K_0,
    \end{equation}
    so the refine pass requires $O\bigl(|\mathcal{R}_{\mathrm{ref}}|\bigr) = O(1)$ distance evaluations in $N$.
\end{itemize}

Overall, for fixed stride $s$, number of seeds $K_0$, and fixed local window size, the time complexity per localization query scales as
\begin{equation}
    T_{\mathrm{search}}(N)
    = O\bigl(N/s^2\bigr) + O\bigl(|\mathcal{R}_{\mathrm{ref}}|\bigr)
    = O(N)
\end{equation}
dominated by the coarse pass. The space complexity is $O(N)$ for the precomputed features plus $O(K_0)$ for the seed list and $O(|\mathcal{R}_{\mathrm{ref}}|)$ for the refined candidates. 

\section{Simulation Results and Discussion\label{sec:simulation}}
In this section, we evaluate the end-to-end performance of the proposed quantum magnetic sensing based geo-localization system through numerical simulations over real-world geomagnetic data. Using the United States Geological Survey (USGS) composite magnetic anomaly grid for the United States and Canada, we construct a CONUS-region magnetic map and derive its gradient structure to characterize localization difficulty across different sites. We then quantify the accuracy of the two-stage adaptive sensing protocol when estimating the magnetic field strength contained in this USGS magnetic dataset. Next, we examine how localization performance varies with sensing budget and sensor count under both gradient-space and corner-space Mahalanobis search. Finally, we analyze the computational cost of the search as a function of the size of region of interest (ROI).

\subsection{Magnetic Map}
In this work, we used the composite magnetic anomaly grid for the United States and Canada published by the USGS in March 2025 \cite{mccafferty2023national}. This dataset merges magnetic surveys \cite{ravat2010preliminary, ADGGS2016, MilesOneschuk2016} to create a magnetic anomaly map of North America with a resolution of approximately 1 km (30 arc-seconds). To extract the ROI from the broader North America dataset, we applied a geospatial clipping operation. A polygon defining the CONUS boundary was constructed from the Natural Earth Admin-1 states and provinces dataset \cite{NaturalEarth2024}. This polygon was used to mask the original GeoTIFF, excluding Alaska, Hawaii, and Canadian territories. Fig.~\ref{fig:roi_CONUS} shows the resulting ROI, spanning longitudes from approximately \ang{-125} to  \ang{-66.5} and latitudes from \ang{24} to \ang{49.5}. The anomaly values within the CONUS ROI exhibit a wide dynamic range, from approximately $-332.4$ nT to over $430.4$ nT, reflecting the complex and varied geology of the continent.

To characterize the spatial rate of change of the magnetic anomaly field, we derived a gradient magnitude map from the ROI as shown in Fig. \ref{fig:roi_grad}. The gradient components ($g_x$ and $g_y$) were computed via numerical differentiation of the anomaly grid at cell centers, and the gradient magnitude was calculated as $||\nabla Z|| = \sqrt{g_x^2 + g_y^2}$. This map highlights local contrast in the magnetic anomaly field, with values spanning from 0.02 to 0.14 nT/m over ROI. For localization, these contrasts translate directly into the discriminability of neighboring cells. In high $||\nabla Z||$ terrain, adjacent map cells have more distinct signatures, yielding larger Mahalanobis separation and thus more accurate position estimates. Conversely, in low $||\nabla Z||$ terrain, many cells are nearly indistinguishable, which increases positioning error.

When simulating physical magnetometer measurements, we constructed total field magnetic map, which is the sum of the core Earth's field (main field) and the magnetic anomaly field. We synthesized this total field map by augmenting the USGS anomaly grid with the International Geomagnetic Reference Field (IGRF) model. The geodetic coordinates (latitude, longitude) of each pixel center were transformed into the magnetic field vector, from which the total intensity $B_{IGRF}$ was extracted. The final total field value $B_{total}$ for each pixel was then calculated by summing the crustal anomaly $B_{anomaly}$ and the main field:
\begin{equation}
    B_{total} = B_{IGRF} + B_{anomaly}
\end{equation}
The resulting total field map retains the high-frequency content of the geological anomalies but is dominated by the smooth wavelength trend of the Earth's main field.

\subsection{Magnetic Field Estimation}
\label{subsec:field_estimate}
\begin{figure}[t]
\centering
    \includegraphics[width=\columnwidth]{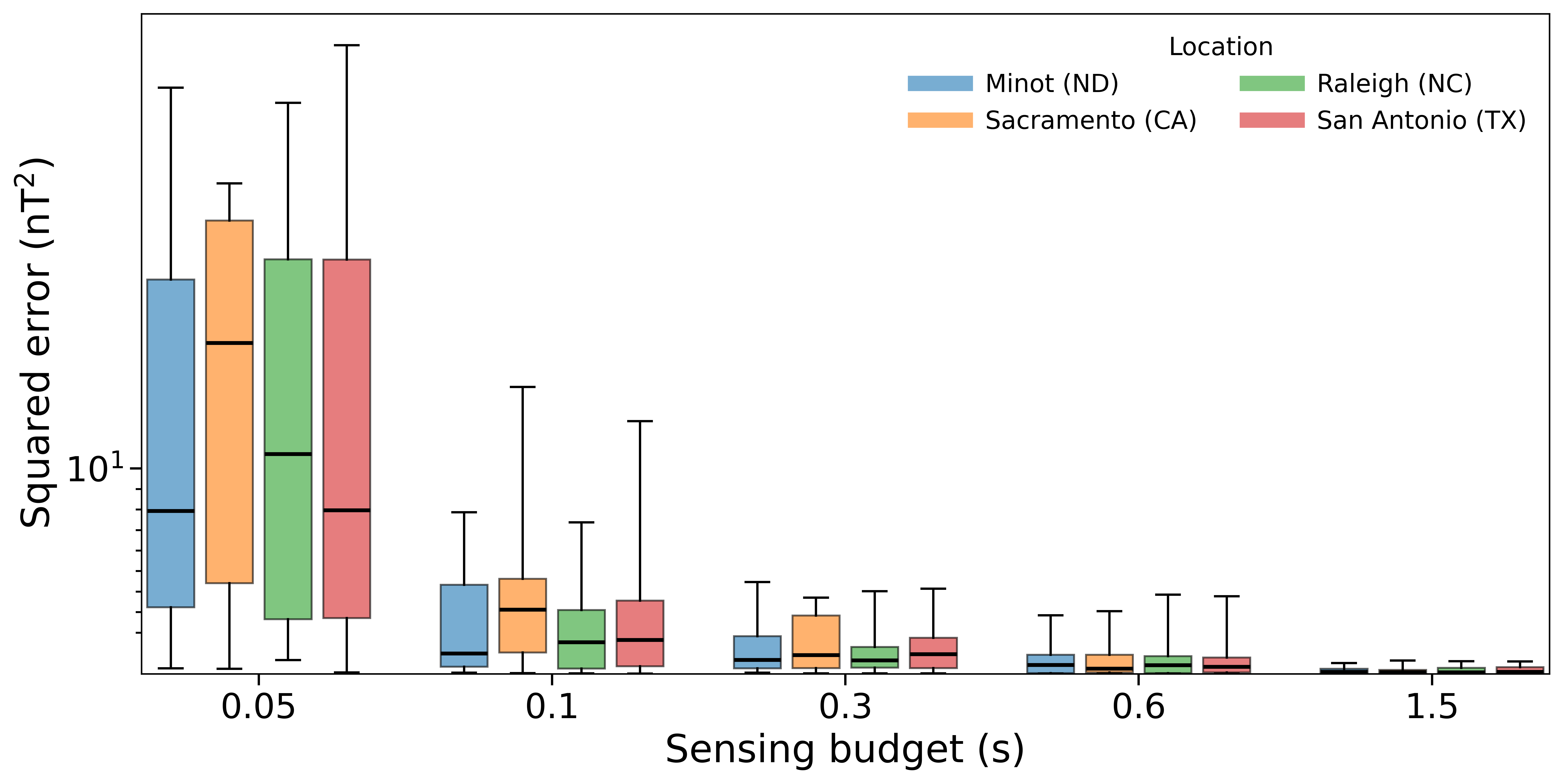}
    \caption{ Square error (SE) of magnetic field estimates as a function of sensing budget for four CONUS sites}
    \label{fig:mse_vs_sensing_budget}
\end{figure}

The magnetic field estimation accuracy was evaluated across four geographically diverse CONUS locations (Minot-ND, Raleigh-NC, Sacramento-CA, San Antonio-TX). At each site, we simulated NV-center magnetometry measurements at the four corners of a $2 \times2$ grid cell  with ground truth magnetic field values from total field maps. The protocol was executed across five sensing budgets (0.05, 0.10, 0.30, 0.60, and 1.5 seconds) with 10 independent repetitions per site-budget pair, resulting in 800 total observations across the experimental matrix. Estimation accuracy was quantified using the squared error (SE) between the estimated magnetic field ($\hat{B}$) and the ground truth ($B_{true}$):
\begin{equation}
    SE_B = (\hat{B} - B_{true})^2
\end{equation}
where all magnetic field values are in nT.

Fig.~\ref{fig:mse_vs_sensing_budget} presents the distribution of SE values with increasing budget across all sites. At the minimal budget of 0.05 seconds, median SE values range from approximately 5-95 $\text{nT}^2$ across locations. This reflects the inherent uncertainty in rapid measurements where the protocol has insufficient time for precise frequency estimation. As the budget increases to 1.50 seconds, median SE values decrease dramatically to approximately 0.01-0.5 $\text{nT}^2$, representing a 2-3 order of magnitude improvement.  This demonstrates the protocol's ability to achieve high precision magnetic field estimates given sufficient sensing time, with interquartile ranges tightening considerably. Minot-ND shows the most rapid convergence to low error values, achieving sub-1 $\text{nT}^2$ median SE by 0.30 seconds, whereas Sacramento-CA exhibits the most challenging conditions, with several extreme outliers at low budgets. 

The performance variations between locations stem from differences in local magnetic field gradients and total field intensities (ranging from $~45,600 - 55,300$ nT across locations). The consistent separation between budget groups across all locations, combined with the tight clustering of results at higher budgets, demonstrates statistically significant improvements in estimation accuracy with increased sensing time.

\begin{figure*}[ht!]
    \centering
    \begin{subfigure}[t]{0.49\textwidth}
        \centering
        \includegraphics[width = \linewidth, height=2.005in]{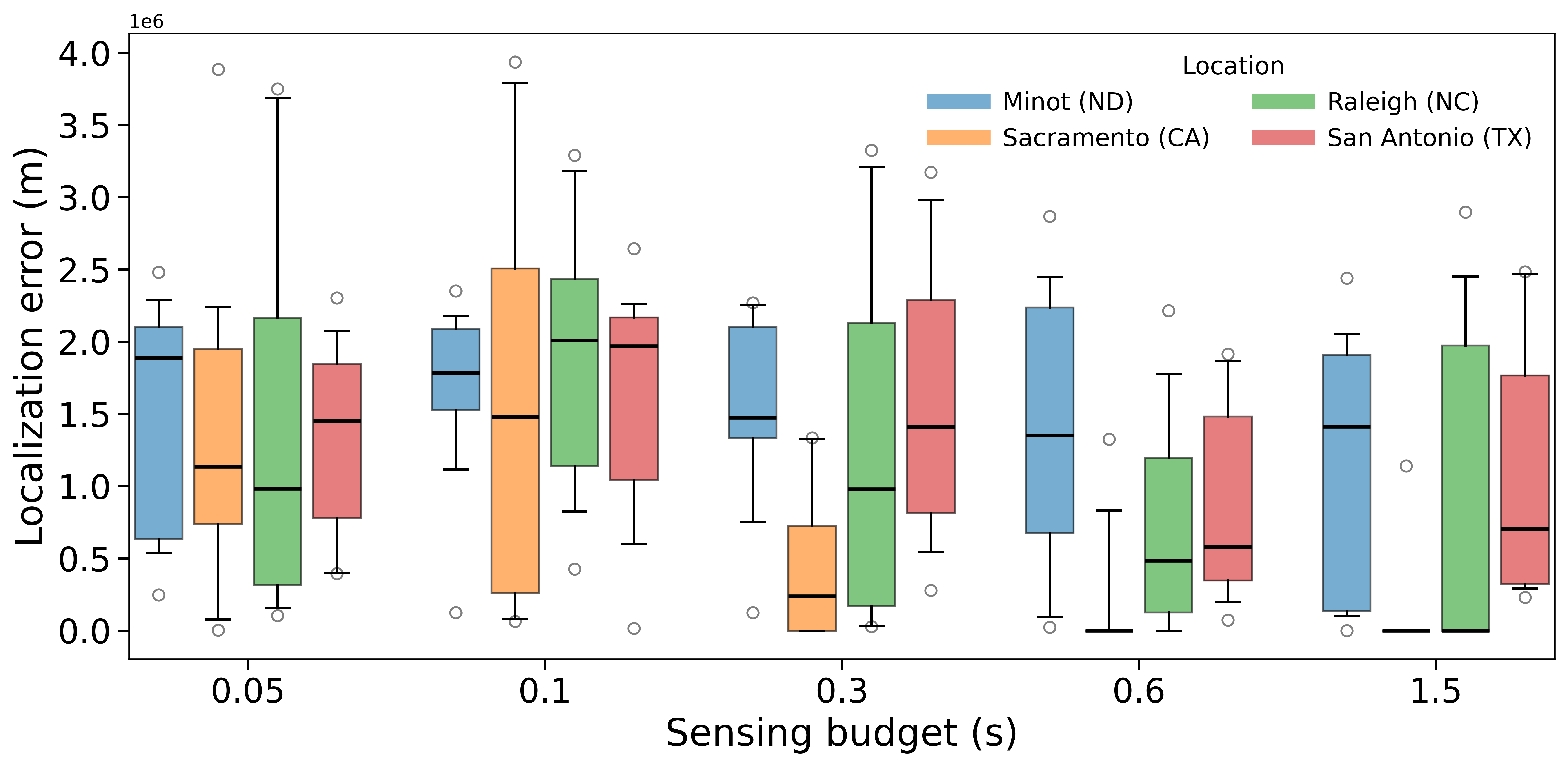}
        \caption{}
        \label{fig:vs_sensing_budget}
    \end{subfigure}
    \hfill
    \begin{subfigure}[t]{0.49\textwidth}
        \centering
        \includegraphics[width = \linewidth, height=2in]{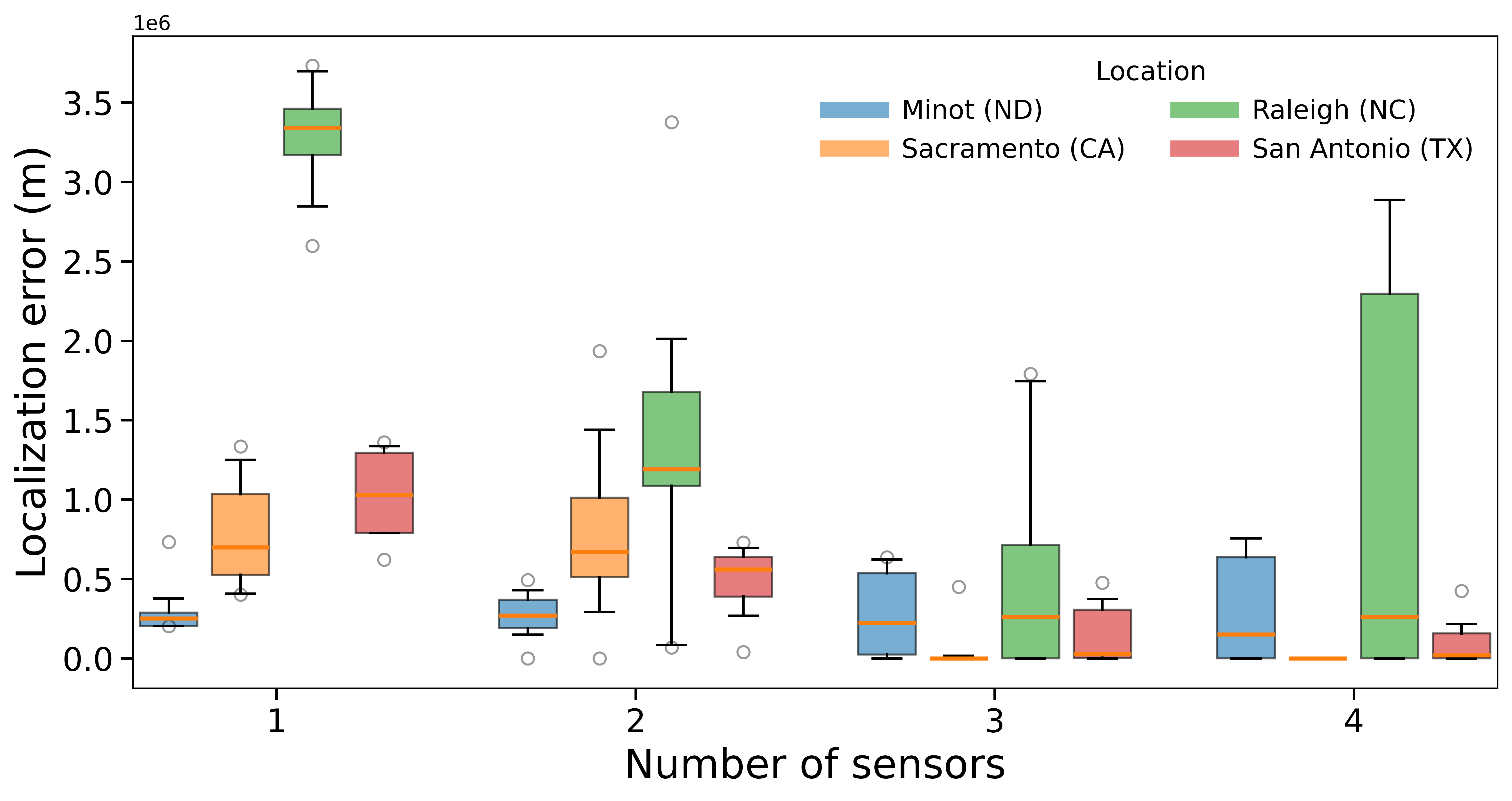}
        \caption{}
        \label{fig:vs_nsensors}
    \end{subfigure}
    \hfill
    \begin{subfigure}[t]{0.49\textwidth}
        \centering
        \includegraphics[width = \linewidth, height=2in]{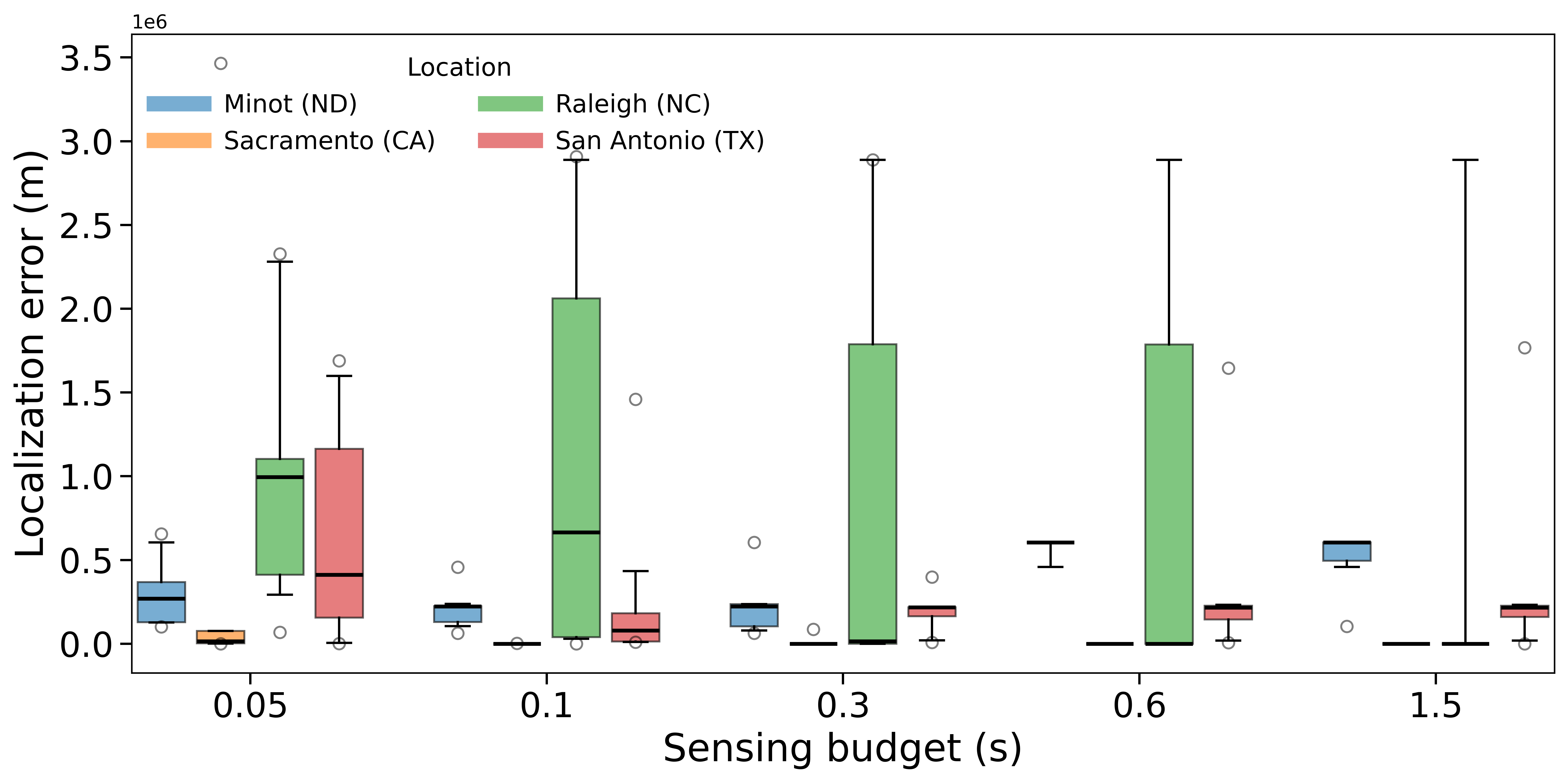}
        \caption{}
        \label{fig:grad_corner}
    \end{subfigure}
    \hfill
    \begin{subfigure}[t]{0.49\textwidth}
        \centering
        \includegraphics[width = \linewidth, height=2in]{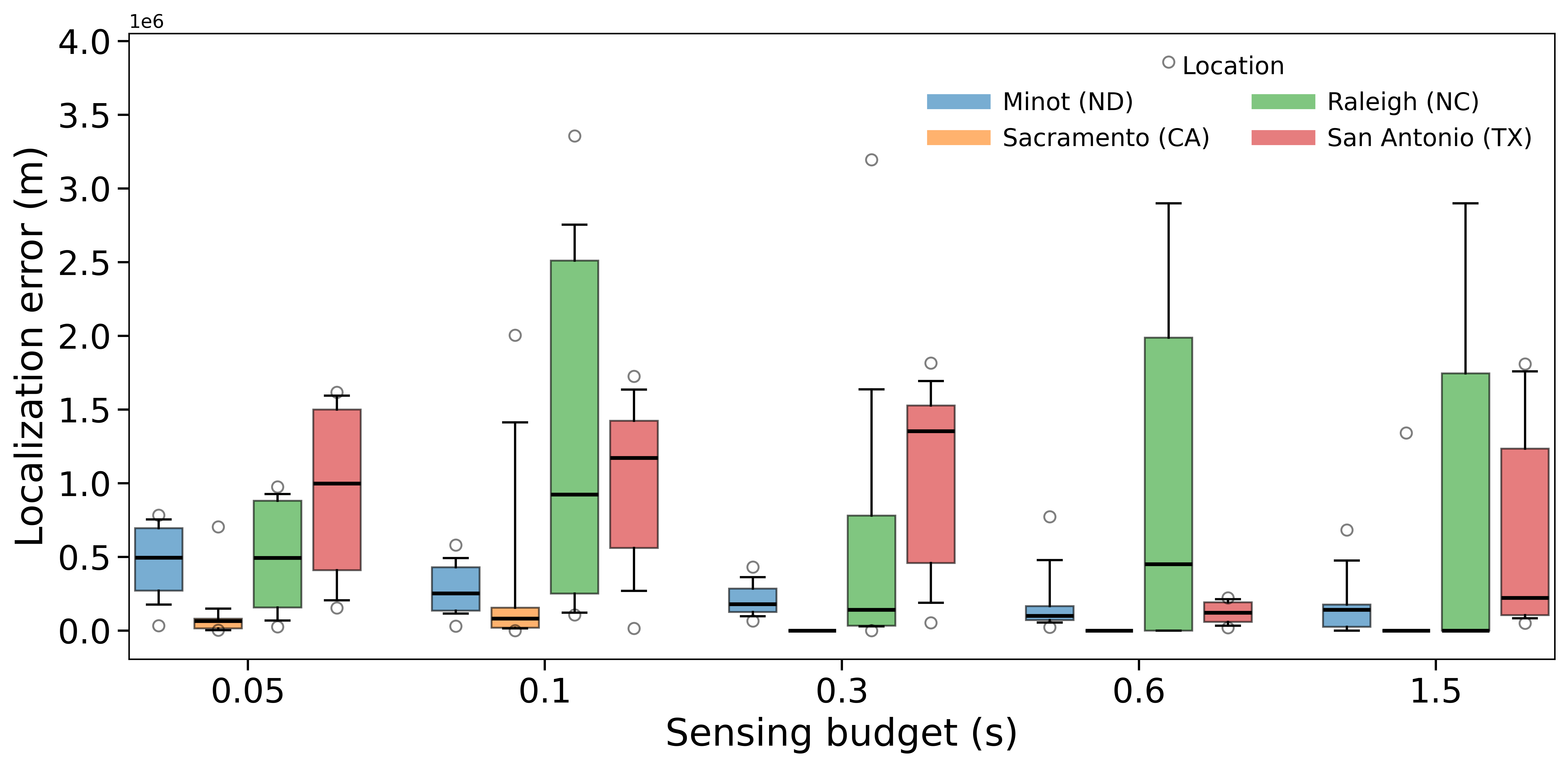}
        \caption{}
        \label{fig:corner_grad}
    \end{subfigure}
    \caption{Localization performance across four CONUS sites. (a)  Localization error as a function of sensing budget for gradient-space Mahalanobis localization. (b) Localization error as a function of the number of sensors for a fixed total sensing budget of 1.5~s using corner-space Mahalanobis localization. (c)--(d) Localization error as a function of sensing budget for gradient-space (corner-space) coarse pass followed by corner-space (gradient-space) refinement.
 }
\end{figure*}


\subsection{Localization error}
\label{subsec:loc_error}

\begin{figure}[ht!]
\centering
    \includegraphics[height = 2.4 in, keepaspectratio]{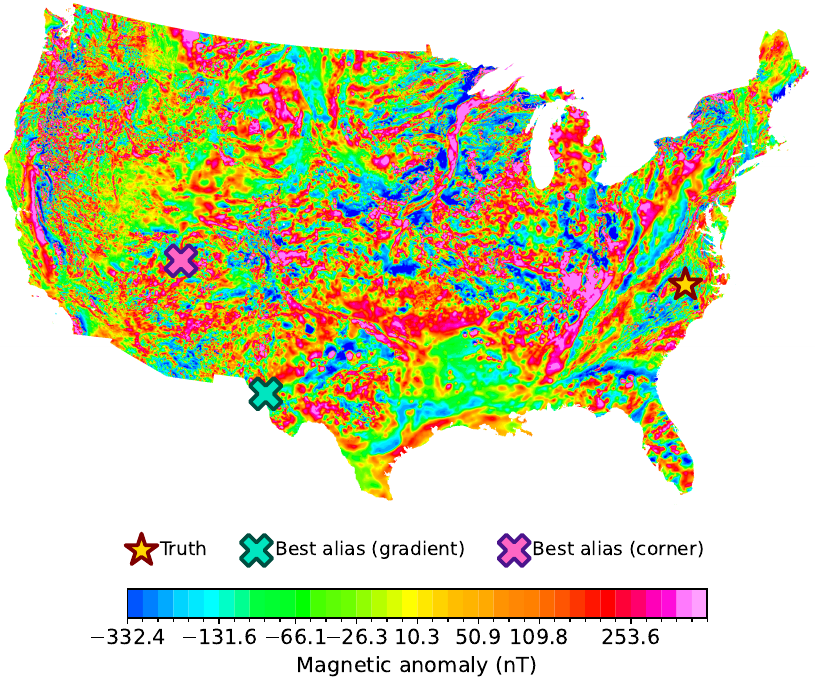}
    \caption{The true test location in Raleigh-NC and the globally best-matching alias locations in gradient space and corner space.}
    \label{fig:ambiguity}
\end{figure}
We evaluated localization performance across four CONUS sites under three Mahalanobis-based search configurations: (i) gradient-space search with varying sensing budgets, (ii) corner-space search with varying numbers of sensors at a fixed total sensing budget, and (iii) a two-metric search that combines gradient and corner features. In all cases, the localization error is defined as the Euclidean distance between the true and estimated positions in the map coordinate system. Fig.~\ref{fig:vs_sensing_budget} reports localization performance across five total sensing budgets ($0.05-1.50$~s) using four sensors and gradient-space Mahalanobis search, where gradient features are used for both the coarse and fine searches. At the smallest budget of $0.05$~s, median localization errors are on the order of $10^6$~m across all sites, reflecting the limited information available from very limited number of sensing events. As the budget increases to $1.5$~s, Raleigh and Sacramento, which lie in high-gradient regions, achieve essentially zero median error. In contrast, Minot and San Antonio, located in smoother magnetic regions, still exhibit residual kilometer-scale errors despite the longer sensing time. This behavior highlights the joint role of sensing time and local gradient structure in determining achievable localization accuracy. Fig.~\ref{fig:vs_nsensors} examines the effect of sensing geometry using corner-space Mahalanobis search at a fixed sensing budget of $1.5$~s. With a single sensor, median errors range from hundreds of kilometers (Minot) to several million meters (Raleigh), with broad error distributions. As additional sensors are added, the error distributions contract, and the medians drop systematically. With four sensors, Sacramento and San Antonio achieve approximately zero median error, while Minot and Raleigh show order-of-magnitude improvements relative to the single sensor case. This trend confirms that spatial diversity in the corner measurements provides a strong geometric lever for disambiguating candidate map cells, particularly when the sensing budget is fixed.

Fig.~\ref{fig:grad_corner} and Fig.~\ref{fig:corner_grad} quantify the performance of the two-metric Mahalanobis pipelines, which combine gradient and corner space but differ in the order in which these features are used. In Fig.~\ref{fig:grad_corner}, the algorithm first performs a coarse pass in gradient space and then refines in corner space. Sacramento and Raleigh again show the best performance, with a zero-median for budgets $\ge 0.3$~s. Minot and San Antonio benefit from the two-metric scheme but remain limited by the weak local gradients. Fig.~\ref{fig:corner_grad} inverts the order, using corner-space coarse pass followed by gradient-space refinement. This variant yields slightly improved medians for Minot and San Antonio, especially at intermediate budgets, while maintaining low error for Sacramento and Raleigh. Across both figures, the two-metric pipelines reduce large outliers relative to the single-metric gradient search and provide more stable performance across sites, indicating that corner and gradient features capture complementary aspects of the magnetic map.

Throughout all four localization experiments in Fig.~\ref{fig:vs_sensing_budget}–\ref{fig:corner_grad}, sites in high-gradient regions (Sacramento-CA and Raleigh-NC) achieve near-zero median error once the sensing budget or sensor count is sufficient. Smoother regions (Minot-ND and San Antonio-TX) remain constrained by weaker magnetic spatial structure, even with longer sensing times or additional sensors. Gradient-space Mahalanobis search benefits most from increased sensing time, while corner-space search gains substantially from multiple sensors. The two-metric pipelines further suppress large outliers and stabilize performance across sites by combining these complementary metrics. We further look at the large interquartile range of Raleigh's localization performance using Fig.~\ref{fig:ambiguity}, which depicts the true test location and the best-matching aliases in gradient and corner feature spaces. Although the gradient and corner feature vectors at these alias cells are nearly indistinguishable from those at the true cell, the corresponding locations lie roughly $2{,}492$~km and $2{,}909$~km away from Raleigh, respectively. Because such distant locations look almost identical in feature space, sensor noise in the NV-center magnetometer can occasionally steer the Mahalanobis search toward one of these aliases, inflating the interquartile range of the localization error.

\subsection{Mahalanobis Runtime}
\begin{figure}[ht!]
\centering
    \includegraphics[width=\columnwidth, height = 2.43in, keepaspectratio]{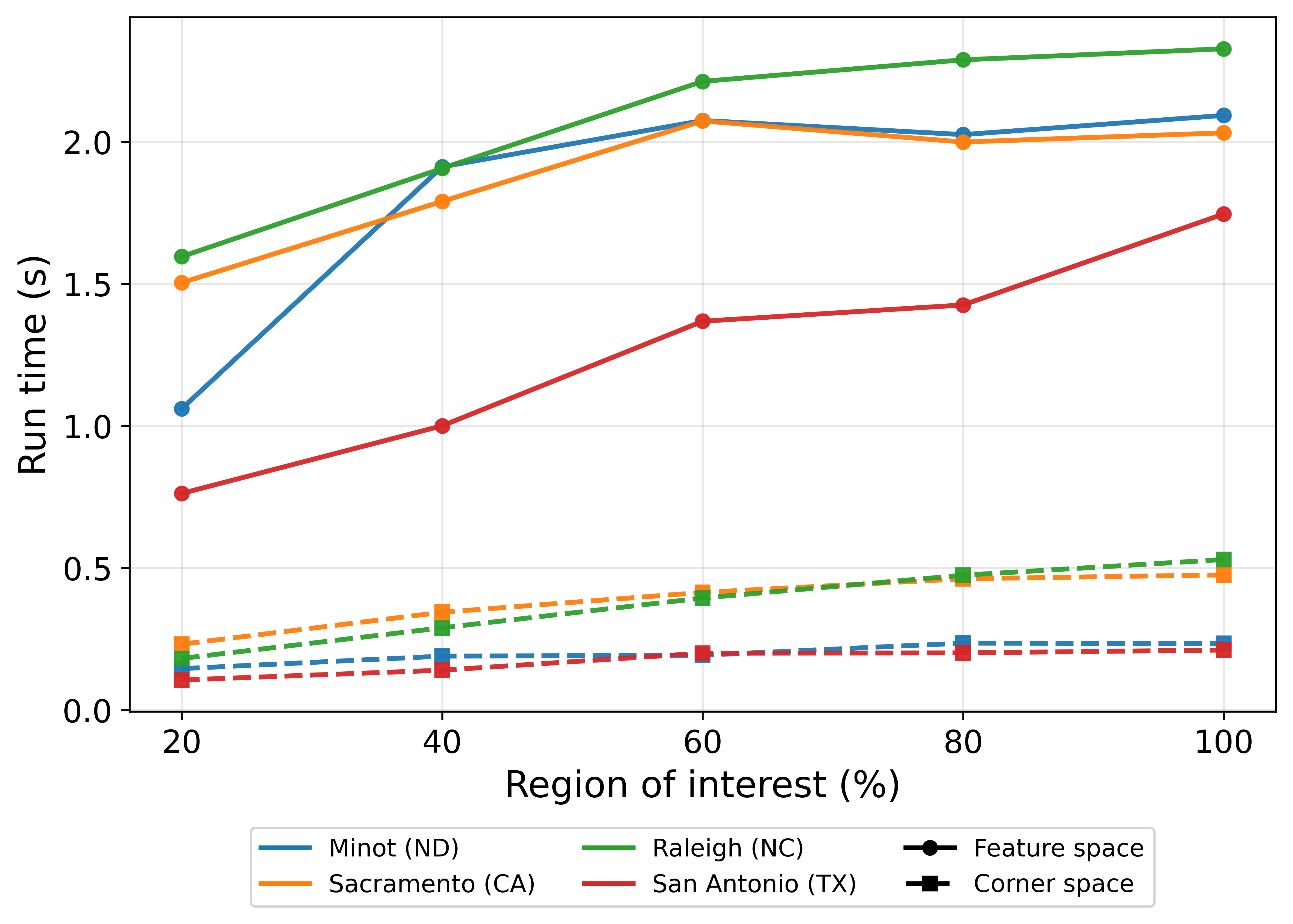}
    \caption{Runtime of the Mahalanobis search as a function of ROI size for four CONUS locations.}
    \label{fig:mahal_runtime}
\end{figure}

To evaluate the computational cost of the Mahalanobis search, we measured runtime as a function of ROI size for all four CONUS locations as shown in Fig.~\ref{fig:mahal_runtime}. All simulations used a fixed sensing budget of $1.5$~s and four sensors for both the feature-space and corner-space formulations. For each location, we constructed a square ROI centered at the true cell, whose size ranged from $20$–$100\%$ of the original CONUS map. Across all locations, feature-space search exhibits a clear increase in runtime with ROI size.  For the smallest ROI, median runtimes lie between about $0.75$~s (San Antonio) and $1.6$~s (Raleigh).  When the ROI is expanded to the full map, the runtimes increase to approximately $2.0$–$2.4$~s. This sublinear growth with ROI size is consistent with the design of the search algorithm.

Corner-space Mahalanobis search shows the same qualitative dependence on ROI size, but with a substantially smaller factor.  Across all locations and all ROI sizes, corner-space runtimes remain between $0.1$~s and $0.6$~s, achieving a speedup of roughly $4$–$8 \times$ relative to feature-space search. These results show that the proposed Mahalanobis search remains computationally tractable even when scanning the entire CONUS map. The corner representation is therefore attractive not only for its favorable localization performance in smooth magnetic regions, but also for its significantly reduced search cost, which becomes important when running many localization queries or deploying the method in real-time systems.

\section{Conclusion\label{sec:conclusion}}

This paper investigates the feasibility of quantum-enhanced, map-based geo-localization using quantum magnetic sensing. We first developed a unified problem formulation that separates the geometric, sensing, and inference components of magnetic localization. Within this framework, we derived the Cramér–Rao lower bound for a Ramsey-based NV-center magnetometer with realistic readout fidelities and coherence times, and compared it to the fundamental precision limit of a classical scalar magnetometer. Building on this analysis, we integrated a two-stage adaptive sensing protocol with a Mahalanobis map-matching engine operating on a high-resolution USGS magnetic anomaly map of CONUS. 

Simulation results across four geographically diverse CONUS sites validate our integrated pipeline, demonstrating nT-level magnetic field estimation accuracy and sub-kilometer median localization error under realistic sensing budgets. In magnetically smoother areas, the corner-space formulation provides better localization accuracy compared to gradient-space search while maintaining computational tractability, even when scanning the full CONUS map. These findings collectively indicate that NV-center magnetometers, coupled with appropriate adaptive sensing and map-matching algorithms, offer a viable and efficient path toward geo-localization in infrastructure-free environments.

\section*{Acknowledgment} \label{sec:ack}
This work is supported in part by the National Science Foundation under grants 2304118 and 2326746.

\appendices
\section{Proof of Theorem 1: Classical Magnetometer Cramér–Rao Bound}\label{appen:A}
We derive the Cramér–Rao lower bound for a classical scalar magnetometer as stated in Theorem. \ref{thm:1}. The measurement model is give by:
\begin{equation}
    \hat{B} = B + \epsilon,
\end{equation}
where $\epsilon$ is drawn from a Gaussian distribution with zero mean and variance $\sigma^2$. The likelihood function of observing $\hat{B}$ given the true value $B$ is therefore:
\begin{equation}
    p(\hat{B}|B) = \frac{1}{\sqrt{2\pi\sigma^2}} \exp\left(-\frac{(\hat{B} - B)^2}{2\sigma^2}\right).
\end{equation}
The Fisher information $F(B)$ for a single parameter is defined as \cite{dammann2024cramer}:
\begin{align}
    F(B) &= \mathbb{E}\left[\left(\frac{\partial}{\partial B} \ln p(\hat{B}|B) \right)^2 \right] = \mathbb{E}\left[\left( \frac{\hat{B} - B}{\sigma^2} \right)^2 \right] \\ \nonumber
    &= \frac{1}{\sigma^4} \mathbb{E}\left[ (\hat{B} - B)^2 \right] = \frac{1}{\sigma^4} \cdot \sigma^2 = \frac{1}{\sigma^2}.
\end{align}

According to the Cramér–Rao inequality, the variance of any unbiased estimator $\hat{B}$ of $B$ satisfies:
\begin{equation}
    Var_C(B) \ge \frac{1}{F(B)} = \sigma^2.
\end{equation}
which completes the proof.
\balance
\bibliographystyle{IEEEtran}
\bibliography{references}

\end{document}